\documentclass{amsart}
\usepackage{amsmath}
\usepackage{graphicx}
\usepackage{amssymb}
\usepackage{bbm}
\usepackage{mathrsfs}
\usepackage{amssymb}
\usepackage{amsthm}
\usepackage{latexsym}
\usepackage{indentfirst}
\usepackage{enumitem}
\usepackage{anysize}
  \usepackage{graphics} 
  \usepackage{epsfig} 
\usepackage{graphicx}  \usepackage{epstopdf}
 \usepackage[colorlinks=true]{hyperref}
\hypersetup{urlcolor=blue, citecolor=red}

  \textheight=8.2 true in
   \textwidth=5.0 true in
    \topmargin 30pt
     \setcounter{page}{1}


\newcommand{\F}{\mathbb{F}}

\newtheorem{theorem}{Theorem}[section]

\newtheorem{lemma}[theorem]{Lemma}
\newtheorem{proposition}[theorem]{Proposition}

\theoremstyle{definition}

\begin{document}

\title{Two-weight codes from trace codes over $R_k$}
      \author[MinJia Shi, Yue Guan]{}
      \subjclass{Primary: 94 B25; Secondary: 05 E30 .}
 \email{smjwcl.good@163.com}
\email{guanyueeee@163.com}
 \thanks{The first author is supported by NNSF of China (61202068),
Technology Foundation for Selected Overseas Chinese Scholar, Ministry of Personnel of China (05015133) and the Open Research Fund of National Mobile Communications Research Laboratory, Southeast University (2015D11). }

\thanks{$^{**}$ Corresponding author: }

\maketitle

\medskip \centerline{\scshape MinJia Shi$^{**}$}
\medskip
{\footnotesize
   \centerline{School of Mathematical Sciences,}
   \centerline{Anhui University,}
   \centerline{ Hefei, Anhui Province 230039, PR China}
  }
\medskip \centerline{\scshape Yue Guan$^*$}
{\footnotesize
\centerline{School of Mathematical Sciences,}
   \centerline{Anhui University,}
   \centerline{ Hefei, Anhui Province 230039, PR China}}

\bigskip


\begin{abstract}
We construct a family of two-Lee-weight codes over the ring $R_k,$ which is defined as trace codes with algebraic structure of abelian codes. The Lee weight distribution of the two-weight codes is given. Taking the Gray map, we obtain optimal abelian binary two-weight codes by using the Griesmer bound. An application to secret sharing schemes is also given.
\end{abstract}

{\bf Keywords:} two-weight codes, trace codes, Griesmer bound, secret sharing schemes
\section{Introduction}
For the reason that the access structures of the secret sharing schemes derived from few weights linear codes can be completely determined \cite{DD, YD}, linear codes with few weights are of great interest in secret sharing schemes. Two-Lee-weight codes over fields have been studied since 1970s due to their connections to strongly regular graphs, difference sets and finite geomertry \cite{CK, D}. Recently, the notion of trace codes are extended from fields to rings and some trace codes over rings are introduced. In \cite{SL1, SL2}, two infinite families of two-Lee-weight codes over the rings $\F_2+u\F_2$ and $\F_2+u\F_2+v\F_2+uv\F_2$ are constructed. On the basis of the study of the ring $R_k$ in \cite{DY3}, we shall construct a family of two-Lee-weight codes over $R_k$, which is a generalization of the rings $\F_2+u\F_2$ and $\F_2+u\F_2+v\F_2+uv\F_2.$

In this paper, we construct a family of two-weight codes which are provably abelian but perhaps not cyclic. These codes are constructed as trace codes. We determine the weight distributions of these codes by exponential character sums. Taking the Gray map, we obtain a family of binary abelian codes and they are shown to be optimal by the application of the Griesmer bound. Moreover, an application to secret sharing schemes is also given.

The paper is organized as follows. In the next section, we give some definitions and propositions about the ring $R_k$, the Gray map and the Lee weight that we need in the following parts of the paper. Section 3 shows some information of the codes over the ring $R_k.$ Section 4 shows that the trace codes are abelian. Section 5 computes the weight distribution of our codes and proves the optimality of them by the application of the Griesmer bound.
Section 6 determines the minimum distance of the dual codes, proves that all these codes are minimal and then describes an application to secret sharing schemes. In section 7, we make a summary of this paper and give some interesting problems.

\section{Definition}
\subsection{Rings}
Consider the ring $R_k=\mathbb{F}_2[u_1,u_2,...,u_k]/\langle u_i^2=0,u_iu_j=u_ju_i\rangle$ with $k \geq 1.$ The ring can also be defined recursively as $$R_k=R_{k-1}[u_k]/\langle u_k^2=0,u_ku_j=u_ju_k\rangle, j=1,2,...,k-1.$$

In order to represent the elements of $R_k$ conveniently, we define $u_A:=\prod\limits _{i\in A}u_i$ for any subset $A\subseteq\{1,2,...,k\}.$ What's more, $u_A=1$ when $A=\emptyset$ by the convention. Then any element of $R_k$ can be represented as $\sum\limits _{A\subseteq\{1,2,...,k\}}c_Au_A, c_A\in \mathbb{F}_2.$ The ring $R_k$ can be extended to $\mathcal{R}=\mathbb{F}_{2^m}[u_1,u_2,...,u_k]/\langle u_i^2=0,u_iu_j=u_ju_i\rangle$ with a given positive integer $m.$ The elements of $\mathcal{R}$ are in the form of $\sum\limits _{A\subseteq\{1,2,...,k\}}c_Au_A, c_A\in \mathbb{F}_{2^m}.$ An element of $\mathcal{R}$ is a unit if the coefficient of $c_\emptyset\neq 0$ and it is a maximal ideal if $c_\emptyset=0.$ Let $\mathcal{R}^*$ be a set that contains all the units of $\mathcal{R}$ and $M$ denote maximal ideal. Hence, $\mathcal{R}=\mathcal{R}^*\bigcup M.$

There is a Frobenius operator $F$ which maps $\sum\limits _{A\subseteq\{1,2,...,k\}}c_Au_A$ to $\sum\limits _{A\subseteq\{1,2,...,k\}}c_A^2u_A.$ Let $Tr$ denote the \emph{Trace function}, which is defined as $Tr=\sum_{j=0}^{m-1}F^j.$ It is obvious that $Tr\big(\sum\limits _{A\subseteq\{1,2,...,k\}}c_Au_A\big)=\sum\limits _{A\subseteq\{1,2,...,k\}}u_Atr(c_A)$ and $tr()$ denotes the standard trace from $\mathbb{F}_{2^m}$ to $\mathbb{F}_2$.
\subsection{The Lee weight and the Gray map}
The Lee weight was defined in \cite{YK} as the Hamming weight of the image of the codeword under the Gray map $\phi_k$ and \cite{DY3} has given a recursive definition of $\phi_k.$ For $\bar{c}\in R_k^n,$ it can be represented as $\bar{c}_1+u_k\bar{c}_2$ with $\bar{c}_1,\bar{c}_2\in R_{k-1}^n.$ Then, the Gray map is defined as follows: $$\phi_k(\bar{c})=(\phi_{k-1}(\bar{c}_2),\phi_{k-1}(\bar{c}_1)+\phi_{k-1}(\bar{c}_2)).$$ With the definition of the Lee weight, we obtain a distance preserving map from $R_k^n$ to $\mathbb{F}_2^{2^kn}.$
\section{Codes}
A linear code over $R$ of length $n$ is a $R$-submodule of $R^n.$ Assume that $\sum_A c_Au_A$ and $\sum_B d_Bu_B$ with $A,B\subseteq\{1,2,...,k\}$ are two elements of the ring $R$, then their standard inner product $(\sum_A c_Au_A)(\sum_B d_Bu_B)=\sum\limits _{A,B\subseteq\{1,2,...,k\}, A\cap B=\emptyset}c_Ad_Bu_{A\cup B}$ which is defined over $R.$ Denote the dual code of $C$ by $C^\perp$ and define it as $$C^\perp =\big\{\sum_B d_Bu_B\big |\big(\sum_A c_Au_A\big)\big(\sum_B d_Bu_B\big)=0, \forall\sum_A c_Au_A\in C\big\}.$$
\section{Symmetry}
In this paper, we define $L=\mathcal{R}^*$, $n=|L|=(2^m-1)2^{m(2^k-1)}$ and $N=2^kn.$ For $a\in \mathcal{R},$ we define the vector $ev(a)$ by the evaluation map $ev(a)=(Tr(ax))_{x\in L}.$ The code $C$ is defined by the formula $C=\{ev(a)|a\in \mathcal{R}\}.$ Thus, the length of the code $C$ is $n$ while the length of the code $\phi_k (C)$ is $N.$
\begin{proposition}
The group $L$ acts regularly on the coordinates of $C.$
\end{proposition}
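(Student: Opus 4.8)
The plan is to realize the natural multiplication action of the group $L = \mathcal{R}^*$ on itself as a group of coordinate permutations, and then to verify two things: that this action is regular as an abstract action of $L$ on the index set, and that each resulting permutation is a symmetry of the code $C$. Since the coordinates of $C$ are indexed precisely by the elements $x \in L$, and $L$ is a group under the (commutative) multiplication of $\mathcal{R}$ of order exactly $n = |L|$, the natural choice is to let $g \in L$ send the coordinate labelled by $x$ to the coordinate labelled by $gx$. This assignment is well defined and bijective because $g$ is a unit, so the map $\pi_g : x \mapsto gx$ permutes $L$.

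First I would check that the action of $L$ on the index set is regular, which is just the left-regular representation of a group on itself. Given coordinates $x, y \in L$, the element $g = yx^{-1} \in L$ satisfies $gx = y$, so the action is transitive; and if $gx = x$ for some $x$, then right-multiplying by $x^{-1}$ gives $g = 1$, so no nonidentity element fixes a coordinate, i.e., the action is free. Together these say the action is simply transitive, hence regular.

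The substantive step is to confirm that each coordinate permutation $\pi_g$ is an automorphism of $C$, i.e., that it carries codewords to codewords. Writing a codeword as $ev(a) = (Tr(ax))_{x \in L}$ with $a \in \mathcal{R}$, I would track the entry of the permuted word at coordinate $x$, which by the standard convention for applying a permutation to a vector equals the entry of $ev(a)$ at $\pi_g^{-1}(x) = g^{-1}x$, namely $Tr(a g^{-1} x)$. Setting $b = a g^{-1}$ and using that $\mathcal{R}^*$ is closed under inverses so that $b = ag^{-1} \in \mathcal{R}$, this is exactly the $x$-entry of $ev(b)$. Hence $\pi_g$ sends $ev(a)$ to $ev(ag^{-1}) \in C$; since $\pi_g$ is a bijection and $C$ is finite, $\pi_g(C) = C$, so the action of $L$ preserves $C$.

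I expect the main obstacle to be bookkeeping rather than anything deep: one must be careful that $ev(a)$ is indexed by $L$ in the intended way and that the reindexing $x \mapsto g^{-1}x$ matches the convention for applying a coordinate permutation to a vector, so that the image genuinely lands back in $C$ rather than in some larger ambient code. Commutativity of $\mathcal{R}$ makes the trace computation symmetric and, incidentally, shows that the regular group $L$ is abelian, which is exactly what justifies calling $C$ an abelian code in the following section.
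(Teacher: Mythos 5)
Your proof is correct and takes essentially the same approach as the paper, whose entire argument is the one-line observation that the change of variables $x\mapsto (u/v)x$ gives the unique coordinate permutation carrying $v$ to $u$. You additionally spell out the step the paper leaves implicit, namely that each permutation $\pi_g$ actually preserves $C$ (via $\pi_g(ev(a))=ev(ag^{-1})$), which is a worthwhile completion rather than a different method.
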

\begin{proof}
For any $u,v\in L,$ the change of variables $x\mapsto (u/v)x$ maps $v$ to $u.$ With given $u,v,$ a permutation with this property is unique.
\end{proof}
The code $C$ is thus an abelian code based on the group $L.$ In other words, it is an ideal of the group ring $R[\mathcal{R}^*]$. $L$ is not a cyclic group, thus $C$ is not likely to be  cyclic.


\section{Weight distribution}
Before calculating the Lee weight of the codewords of the code $C,$ we recall some classic lemmas first.
\begin{lemma}(\cite{MS}, $(6)$, p. $412$)\label{a}
If $y=(y_1,y_2,...,y_n)\in \mathbb{F}_2^n,$ then $2w_H(y)=n-\sum\limits_{i=1}^n(-1)^{y_i}.$
\end{lemma}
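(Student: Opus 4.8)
The plan is to establish the identity coordinate by coordinate, using the elementary observation that for a single bit $y_i \in \mathbb{F}_2 = \{0,1\}$ the sign $(-1)^{y_i}$ records exactly whether that coordinate contributes to the Hamming weight. Concretely, I would first note that $(-1)^{y_i} = 1$ precisely when $y_i = 0$ and $(-1)^{y_i} = -1$ precisely when $y_i = 1$, which can be summarized in the single linear relation $(-1)^{y_i} = 1 - 2 y_i$, where on the right-hand side $y_i$ is read as the integer $0$ or $1$.

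Summing this relation over all $n$ coordinates would then give
\begin{equation*}
\sum_{i=1}^n (-1)^{y_i} = \sum_{i=1}^n (1 - 2 y_i) = n - 2\sum_{i=1}^n y_i = n - 2 w_H(y),
\end{equation*}
since $\sum_{i=1}^n y_i$ counts exactly the number of nonzero entries, which is the Hamming weight $w_H(y)$. Rearranging this equality immediately yields $2 w_H(y) = n - \sum_{i=1}^n (-1)^{y_i}$, as claimed. An equivalent bookkeeping argument is to let $a$ and $b$ denote the number of zero and nonzero coordinates respectively, so that $a + b = n$ and $b = w_H(y)$; then $\sum_{i=1}^n (-1)^{y_i} = a - b$, whence $n - (a-b) = (a+b)-(a-b) = 2b = 2 w_H(y)$.

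The statement carries essentially no obstacle: the only point requiring any care is the dual reading of each $y_i$ as both an element of $\mathbb{F}_2$ (where it indexes the exponent of $-1$) and as an ordinary integer (where it is summed to produce the weight). Once this identification is made explicit, the identity is purely formal, and it is exactly this device that will later let us convert Hamming weights of Gray images into exponential character sums.
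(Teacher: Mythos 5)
Your proof is correct: the identity $(-1)^{y_i}=1-2y_i$ summed over coordinates gives exactly the claimed relation, and your alternative counting argument with $a$ zeros and $b$ ones is equally valid. The paper offers no proof of its own, simply citing MacWilliams--Sloane, and your argument is precisely the standard one-line verification that reference contains, so there is nothing to add.
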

\begin{lemma}(\cite{MS}, Lemma 9, p. $143$)
For any $z\in \mathbb{F}_{2^m}^*,$ $\sum\limits _{x\in \mathbb{F}_{2^m}}(-1)^{tr(zx)}=0$ is set up. Apparently, $\sum\limits _{x\in \mathbb{F}_{2^m}^*}(-1)^{tr(zx)}=-1.$
\end{lemma}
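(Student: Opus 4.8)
The plan is to recognize the map $x \mapsto (-1)^{tr(zx)}$ as a nontrivial additive character of the finite field $\mathbb{F}_{2^m}$ and to exploit the standard orthogonality fact that a nontrivial character sums to zero over the whole group. The argument splits into two parts matching the two displayed identities: the sum over all of $\mathbb{F}_{2^m}$ is the substantive one, and the sum over $\mathbb{F}_{2^m}^*$ follows from it by an elementary bookkeeping step.

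First I would fix $z \in \mathbb{F}_{2^m}^*$ and apply the change of variables $y = zx$. Since $z$ is a unit, multiplication by $z$ is a bijection of $\mathbb{F}_{2^m}$, so as $x$ runs over $\mathbb{F}_{2^m}$ so does $y$, whence
\[
\sum_{x \in \mathbb{F}_{2^m}} (-1)^{tr(zx)} = \sum_{y \in \mathbb{F}_{2^m}} (-1)^{tr(y)}.
\]
This reduces the first claim to the single identity $\sum_{y \in \mathbb{F}_{2^m}} (-1)^{tr(y)} = 0$, now independent of $z$.

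Next I would count the fibres of the trace. The map $tr \colon \mathbb{F}_{2^m} \to \mathbb{F}_2$ is $\mathbb{F}_2$-linear and surjective: being a polynomial map of degree $2^{m-1}$, it has at most $2^{m-1} < 2^m$ zeros and so cannot vanish identically. A surjective $\mathbb{F}_2$-linear map onto $\mathbb{F}_2$ has kernel of index $2$, so exactly $2^{m-1}$ elements satisfy $tr(y)=0$ and exactly $2^{m-1}$ satisfy $tr(y)=1$, giving $\sum_{y} (-1)^{tr(y)} = 2^{m-1} - 2^{m-1} = 0$. The second identity then follows by isolating the term $x=0$, where $tr(0)=0$:
\[
\sum_{x \in \mathbb{F}_{2^m}^*} (-1)^{tr(zx)} = \sum_{x \in \mathbb{F}_{2^m}} (-1)^{tr(zx)} - (-1)^{tr(0)} = 0 - 1 = -1.
\]

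The only genuine input is the nontriviality (surjectivity) of the trace; everything else is a bijection plus a counting argument, so I expect no real obstacle. As this is a classical result quoted from MacWilliams--Sloane, the proof is standard, and the main task is merely to record the surjectivity of $tr$ cleanly so that the fibre count is justified.
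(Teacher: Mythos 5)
Your proof is correct and complete: the paper itself offers no argument for this lemma, merely citing MacWilliams--Sloane (Lemma 9, p.~143), and your substitution $y = zx$ followed by the fibre count for the surjective $\mathbb{F}_2$-linear trace map is precisely the standard argument behind that citation. The bookkeeping step isolating the $x=0$ term to get $-1$ over $\mathbb{F}_{2^m}^*$ is also right, so there is nothing to repair.
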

\begin{theorem}
For $a\in \mathcal{R},$ the Lee weight of the codewords of $C$ is as follows:
\begin{itemize}
  \item  If $a=0,$ then $w_L(ev(a))=0;$
  \item  If $a\in M\setminus\{0\}$, then\\
            if $a=c_{\{1,...,k\}}u_{\{1,...,k\}}$ and $c_{\{1,...,k\}}\in \mathbb{F}_{2^m}^*$, then $w_L(ev(a))=2^{2^km+k-1},$\\
            if $a\in M\setminus\{0,c_{\{1,...,k\}}u_{\{1,...,k\}}\},$ then $w_L(ev(a))=2^{k-1}(2^m-1)2^{m(2^k-1)};$
  \item  If $a\in \mathcal{R}^*$, then $w_L(ev(a))=2^{k-1}(2^m-1)2^{m(2^k-1)}.$
\end{itemize}
\end{theorem}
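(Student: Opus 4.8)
The plan is to reduce the Lee weight to a signed exponential sum over $L$ and evaluate it through the additive characters of $\mathbb{F}_{2^m}$. Since the Gray map acts coordinatewise (up to a fixed permutation of the output bits), the Lee weight is additive over coordinates, so $w_L(ev(a))=\sum_{x\in L}w_L(Tr(ax))$; applying Lemma~\ref{a} to each single-symbol Gray image gives
\[
w_L(ev(a)) = \tfrac12\Big(2^k n - \sum_{x\in L}\theta(Tr(ax))\Big),\qquad \theta(r):=\sum_{j=1}^{2^k}(-1)^{\phi_k(r)_j}.
\]
Thus the whole task is to compute $\Sigma:=\sum_{x\in L}\theta(Tr(ax))$ in the three regimes for $a$. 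The first step is to find a closed form for $\theta$. Writing $r=\sum_A r_Au_A\in R_k$, I would prove by induction on $k$, using the recursion $\phi_k(\bar c)=(\phi_{k-1}(\bar c_2),\phi_{k-1}(\bar c_1)+\phi_{k-1}(\bar c_2))$, that
\[
\theta(r)=\sum_{S\subseteq\{1,\dots,k\}}(-1)^{\psi_S(r)},\qquad \psi_S(r):=\sum_{A\supseteq S}r_A,
\]
the base case $k=1$ giving $\theta(r)=(-1)^{r_{\{1\}}}(1+(-1)^{r_\emptyset})$. This identifies the $2^k$ linear functionals cut out by the (invertible) Gray matrix as exactly the superset-sum functionals $\psi_S$.

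Next I substitute $r=Tr(ax)$, so that $r_A=tr((ax)_A)$ and $\psi_S(Tr(ax))=tr(\psi_S(ax))$. Expanding the ring product yields $\psi_S(ax)=\sum_{C}\lambda^{(S)}_C(a)\,x_C$ with $\lambda^{(S)}_C(a)=\sum_{B\cap C=\emptyset,\;B\supseteq S\setminus C}a_B$, an $\mathbb{F}_{2^m}$-linear functional of $x$. Interchanging summation, $\Sigma=\sum_S\sum_{x\in\mathcal{R}^*}(-1)^{tr(\psi_S(ax))}$, I factor the inner sum over the independent coordinates $x_C$ of $x$: the free coordinates ($C\neq\emptyset$) are handled by $\sum_{x_C\in\mathbb{F}_{2^m}}(-1)^{tr(\lambda x_C)}\in\{2^m,0\}$ and the unit coordinate ($x_\emptyset\in\mathbb{F}_{2^m}^*$) by the analogous value $\{2^m-1,-1\}$, both coming from the second recalled lemma. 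Hence the $S$-term vanishes unless $\lambda^{(S)}_C(a)=0$ for every $C\neq\emptyset$; writing $Z'(a)$ for the set of such $S$ and $Z(a)\subseteq Z'(a)$ for those additionally satisfying $\lambda^{(S)}_\emptyset(a)=0$, one obtains the clean intermediate formula $\Sigma=2^{m(2^k-1)}\big(2^m|Z(a)|-|Z'(a)|\big)$.

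The crux is then to determine $Z'(a)$ and $Z(a)$; this is the main obstacle, and it is combinatorial. It is resolved by two well-chosen values of $C$. Taking $C=\{1,\dots,k\}$ forces $B=\emptyset$, so $\lambda^{(S)}_{\{1,\dots,k\}}(a)=a_\emptyset$ for every $S$; hence if $a$ is a unit then $Z'(a)=\emptyset$ and $\Sigma=0$. If $a\in M\setminus\{0\}$ is not a multiple of $u_{\{1,\dots,k\}}$, I pick a minimal nonempty $B_0\neq\{1,\dots,k\}$ with $a_{B_0}\neq0$ and take $C=\{1,\dots,k\}\setminus B_0$; by minimality and $a_\emptyset=0$ the defining sum collapses to the single surviving term $a_{B_0}$, so $\lambda^{(S)}_C(a)=a_{B_0}\neq0$ for every $S$, again giving $Z'(a)=\emptyset$ and $\Sigma=0$. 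Finally, for $a=c\,u_{\{1,\dots,k\}}$ with $c\neq0$ one checks directly that $\lambda^{(S)}_C(a)=c$ when $C=\emptyset$ and $0$ otherwise, so $Z'(a)$ is the full power set while $Z(a)=\emptyset$, giving $\Sigma=-2^{k}2^{m(2^k-1)}$.

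Substituting back into $w_L(ev(a))=\tfrac12(2^k n-\Sigma)$ and using $n=(2^m-1)2^{m(2^k-1)}$ finishes the computation: the two cases with $\Sigma=0$ return $2^{k-1}(2^m-1)2^{m(2^k-1)}$, the case $a=c\,u_{\{1,\dots,k\}}$ returns $\tfrac12\big(2^k n+2^{k}2^{m(2^k-1)}\big)=2^{2^km+k-1}$, and $a=0$ is trivial. I expect the induction establishing the formula for $\theta$ and the last combinatorial case analysis of $Z'(a)$ to carry all the real difficulty; the character-sum evaluation sandwiched between them is routine once the two recalled lemmas are in hand.
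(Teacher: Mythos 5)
Your proposal is correct, and its skeleton is the same as the paper's: reduce the Lee weight via Lemma 5.1 to the exponential sum $\sum_{x\in L}\sum_{j}(-1)^{\phi_k(Tr(ax))_j}$, and kill it (or evaluate it) with the orthogonality relation $\sum_{x\in\mathbb{F}_{2^m}}(-1)^{tr(zx)}=0$ for $z\neq 0$. Where you genuinely diverge is in execution, and your version is strictly more complete. The paper never writes down what the Gray coordinates are as functionals; instead it computes $Tr(ax)$ explicitly for three representative elements ($a=c_{\{1,\dots,k\}}u_{\{1,\dots,k\}}$, $a=c_{\{1\}}u_{\{1\}}$, and a unit), observes in each case that every Gray coordinate ``contains'' a term $tr(c\,d_B)$ with $d_B$ ranging freely over $\mathbb{F}_{2^m}$, and then simply asserts that ``for other $a\in M\setminus\{0,u_{\{1\}},u_{\{1,\dots,k\}}\}$ the Lee weight equals $2^{k-1}|L|$ as well.'' That assertion is exactly the nontrivial part of the second bullet (e.g.\ $a=u_{\{1,2\}}+u_{\{1,3\}}$ for $k\geq 3$ is covered by no computation in the paper), and your argument closes it: the inductive identification of the Gray coordinates with the superset-sum functionals $\psi_S(r)=\sum_{A\supseteq S}r_A$, the full factorization of the character sum over the independent coordinates $x_C$ (yielding $\Sigma=2^{m(2^k-1)}\bigl(2^m|Z(a)|-|Z'(a)|\bigr)$), and in particular the minimal-support trick --- choosing $B_0$ minimal nonempty in the support of $a$ with $B_0\neq\{1,\dots,k\}$ and testing $C=\{1,\dots,k\}\setminus B_0$, so that minimality plus $a_\emptyset=0$ collapses $\lambda^{(S)}_C(a)$ to $a_{B_0}\neq 0$ --- handles every $a\in M\setminus\{0,c u_{\{1,\dots,k\}}\}$ uniformly. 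I checked the pieces: the closed form for $\theta$ is right (the recursion $\phi_k(\bar c)=(\phi_{k-1}(\bar c_2),\phi_{k-1}(\bar c_1)+\phi_{k-1}(\bar c_2))$ together with $\mathbb{F}_2$-linearity of $\phi_{k-1}$ gives exactly the functional split $\psi_{S\setminus\{k\}}(c_2)$ and $\psi_S(c_1+c_2)$), the coefficient formula $\lambda^{(S)}_C(a)=\sum_{B\cap C=\emptyset,\,B\supseteq S\setminus C}a_B$ follows from $u_Bu_C=u_{B\cup C}$ when $B\cap C=\emptyset$ and $0$ otherwise, and the three evaluations of $\Sigma$ ($0$, $0$, and $-2^k2^{m(2^k-1)}$) reproduce the stated weights. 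In short: same analytic engine as the paper, but your combinatorial bookkeeping ($Z(a)$, $Z'(a)$, minimal $B_0$) turns the paper's representative-case sketch into a complete proof; the cost is the extra induction establishing the $\psi_S$ description of the Gray map, which the paper avoids by working only with specially simple $a$.
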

\begin{proof}
Divide the value of $a$ into three categories to discuss the Lee weight of the codewords.
\begin{itemize}
  \item  When $a=0,$ it is obvious that $w_L(ev(a))=0.$
  \item  If $a\in M\setminus\{0\}$ and $a=c_{\{1,...,k\}}u_{\{1,...,k\}}$ with $c_{\{1,...,k\}}\in \mathbb{F}_{2^m}^*$, let $x=\sum\limits_{B\subseteq\{1,...,k\}} d_Bu_B\in L.$
On the basis of the definition of the inner product, we compute $ax$ and here are the process and result:
\begin{equation*}
\begin{aligned}
ax&=\sum\limits_{B\subseteq\{1,...,k\}} c_{\{1,...,k\}}u_{\{1,...,k\}}d_Bu_B\\
&=c_{\{1,...,k\}}d_\emptyset u_{\{1,...k\}}, \ {\rm where} \ d_\emptyset\in \mathbb{F}_{2^m}^*.
\end{aligned}
\end{equation*}
Then $Tr(ax)=tr(c_{\{1,...,k\}}d_\emptyset)u_{\{1,...,k\}}.$
Taking the Gray map, we get $\phi_k(ev(a))=(\underbrace{tr(c_{\{1,...,k\}}d_\emptyset),...,tr(c_{\{1,...,k\}}d_\emptyset)}_{2^k}).$

According to Lemma 5.1, we have $$2^k|L|-2w_L(ev(a))=2^k(\sum\limits_{d_\emptyset\in \mathbb{F}_{2^m}^*}\sum\limits_{d_B\in \mathbb{F}_{2^m}}(-1)^{tr(c_{\{1,...,k\}}d_\emptyset)}),$$ where $B\neq\emptyset\subseteq\{1,...,k\}$ and the right side of the formula can be written as $-2^k2^{m(2^k-1)}.$ Then, $w_L(ev(a))=2^{2^km+k-1}.$

In addition, if $a\in M\setminus\{0,c_{\{1,...,k\}}u_{\{1,...,k\}}\},$ suppose $a=c_{\{1\}}u_{\{1\}}, c_{\{1\}}\in \mathbb{F}_{2^m}^*,$ and $x=\sum\limits_{B\subseteq\{1,...,k\}}d_Bu_B\in L.$
Next, determine the value of $ax$ in the same way as the previous case. So $ax=\sum\limits_{B\subseteq\{2,...,k\}}c_{\{1\}}d_Bu_{\{1\}\cup B}$ and $Tr(ax)=\sum\limits_{B\subseteq\{2,...,k\}}tr(c_{\{1\}}d_B)u_{\{1\}\cup B}.$ Taking the Gray map, every element of the vector $\phi_k(ev(a))$ contains $tr(c_{\{1\}}d_{\{2,...,k\}})$ which implies that $2^k|L|-2w_L(ev(a))=0$ since $\sum\limits_{d_{\{2,...,k\}}\in \mathbb{F}_{2^m}}(-1)^{tr(c_{\{1\}}d_{\{2,...,k\}})}=0.$
For other $a\in M\setminus\{0,u_{\{1\}},u_{\{1,...,k\}}\},$ the Lee weight equals $2^{k-1}|L|$ as well.
  \item If $a\in\mathcal{R}^*$, let $a=\sum\limits _{A\subseteq\{1,2,...,k\}}c_Au_A$, where $c_{A\setminus\{0\}}\in \mathbb{F}_{2^m}$ and $c_\emptyset\in \mathbb{F}_{2^m}^*.$ Moreover, $x=\sum\limits _{B\subseteq\{1,2,...,k\}}d_Bu_B$, where $d_{B\setminus\{0\}}\in \mathbb{F}_{2^m}$ and $d_\emptyset\in \mathbb{F}_{2^m}^*.$ Computing $ax$ and $Tr(ax)$, we discover that every element of the vector $\phi_k(ev(a))$ contains $tr(c_{\emptyset}d_{\{1,...,k\}})$ which implies $2^k|L|-2w_L(ev(a))=0$ since $\sum\limits_{d_{\{1,...,k\}}\in \mathbb{F}_{2^m}}(-1)^{tr(c_{\emptyset}d_{\{1,...,k\}})}=0.$ Hence, $w_L(ev(a))=2^{k-1}|L|.$
\end{itemize}
\end{proof}
Thus, we have constructed a family of two-weight codes of length $2^k(2^m-1)2^{m(2^k-1)}$ and dimension $2^km.$ Their weight distributions and  respective frequencies are shown in Table I below.
\begin{center}$\mathrm{Table~ I. }~~~\mathrm{weight~ distribution~ of}~ \phi_k(C) $\\
\begin{tabular}{cccc||cc}
\hline
  Weight&&   & & Frequency  \\
  \hline

  0        & &   & & 1\\
  $w_1=2^{k-1}2^{m(2^k-1)}(2^m-1)$        & &   &              &$2^{m2^k}-2^m$\\
  $w_2=2^{k-1}2^{m(2^k-1)}2^m$  &    & &       &$2^m-1$ \\
  \hline
\end{tabular}
\end{center}
Next, we study the optimality of $\phi_k(C)$ by the application of the Griesmer bound. Recall the Griesmer bound applied to an $[n,K,d]$ binary code first. For an $[n,K,d]$ code, we have $\sum_{i=0}^{K-1}\lceil\frac{d}{2^i}\rceil\leq n.$ If the code with parameters $[n,K,d+1]$ does not exist, then the $[n,K,d]$ code is optimal.
\begin{theorem}
For arbitrary integers $m\geq 2$ and $k\geq 1,$  the code $\phi_k (C)$ is optimal.
\end{theorem}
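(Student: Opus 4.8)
The plan is to apply the Griesmer bound directly to the parameters read off from Table~I and to rule out the existence of any code with the same length and dimension but larger minimum distance. First I would identify the minimum distance of $\phi_k(C)$. Since $2^m-1<2^m$, the smaller of the two nonzero weights is $w_1$, so the minimum distance is $d=w_1=2^{k-1}2^{m(2^k-1)}(2^m-1)$, while the dimension is $K=2^km$ and the length is $N=2^k(2^m-1)2^{m(2^k-1)}$. It is convenient to write $d=2^s(2^m-1)$ with $s=m(2^k-1)+k-1$, so that in binary $d$ consists of $m$ ones followed by $s$ zeros. I would also record the key identity $(2^m-1)2^{s+1}=N$, which will make the final comparison transparent.

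Next I would evaluate the Griesmer function $g(K,d)=\sum_{i=0}^{K-1}\lceil d/2^i\rceil$ by splitting the range at $i=s$. For $0\le i\le s$ the quotient $d/2^i=2^{s-i}(2^m-1)$ is an integer, and these $s+1$ terms sum to $(2^m-1)(2^{s+1}-1)$. For $s<i\le K-1$, writing $i=s+j$, the term equals $\lceil(2^m-1)/2^j\rceil=2^{m-j}$; since $K-1-s=m-k$, these terms sum to $\sum_{j=1}^{m-k}2^{m-j}=2^m-2^k$. Combining the two parts and using $(2^m-1)2^{s+1}=N$ yields $g(K,d)=N-(2^k-1)$, which confirms that $\phi_k(C)$ satisfies the Griesmer bound with a slack of exactly $2^k-1$.

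The heart of the argument is then to show that no $[N,K,d+1]$ code exists, since by the stated criterion this gives optimality. I would compute $g(K,d+1)$ by comparing ceilings termwise with $g(K,d)$. Passing from $d$ to $d+1$ turns each of the $s+1$ integer quotients (for $0\le i\le s$) into a value of the form integer plus a fraction in $(0,1]$, so each of those ceilings rises by exactly $1$; for $i>s$ the added $1/2^i$ is too small to move the value out of the interval $(2^{m-j}-1,2^{m-j})$, so the tail ceilings $2^{m-j}$ are unchanged. Hence $g(K,d+1)=g(K,d)+(s+1)$. It remains only to verify the elementary inequality $s+1\ge 2^k$: for $m\ge 2$ one has $s=m(2^k-1)+k-1\ge 2^{k+1}+k-3\ge 2^k-1$, so that $g(K,d+1)\ge(N-2^k+1)+2^k=N+1>N$. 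Any $[N,K,d+1]$ code would force $N\ge g(K,d+1)>N$, a contradiction, so no such code exists and $\phi_k(C)$ is optimal.

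I expect the main obstacle to be the bookkeeping in the two ceiling sums, in particular pinning down the cutoff $i=s$ at which the quotient stops being an integer, and verifying that the perturbation from $d$ to $d+1$ bumps up precisely the first $s+1$ ceilings while leaving the tail ceilings fixed. Once those evaluations are handled carefully, the relations $g(K,d)=N-(2^k-1)$ and $g(K,d+1)=g(K,d)+(s+1)$ fall out cleanly, and the concluding inequality $s+1\ge 2^k$ is routine for all $m\ge 2$ and $k\ge 1$.
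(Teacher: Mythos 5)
Your proof is correct and is essentially the paper's own argument: both split the Griesmer sum for $d+1$ at $j=m(2^k-1)+k-1$, evaluate the same ceilings, and your slack-plus-increment bookkeeping $g(K,d)=N-(2^k-1)$, $g(K,d+1)=g(K,d)+(s+1)$ merely repackages the paper's direct computation $\sum_{j=0}^{K-1}\lceil (d+1)/2^j\rceil-N=(2^k-1)(m-1)+k>0$. One caveat you share with the paper's proof: the tail sum $\sum_{j=1}^{m-k}2^{m-j}$ tacitly assumes $m\ge k$ (otherwise $K-1-s=m-k<0$ and the split is vacuous); when $k>m$ all $K$ indices lie in the first range, and the same kind of computation still yields $\sum_{j=0}^{K-1}\lceil(d+1)/2^j\rceil-N=2^km-(2^m-1)2^{k-m}>0$, so the conclusion is unaffected.
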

\begin{proof}
The parameters of $\phi_k (C)$ are $[N,K,d]=[2^k(2^m-1)2^{m(2^k-1)},2^km,2^{k-1}(2^m-1)2^{m(2^k-1)}].$ We claim that $\sum_{j=0}^{K-1}\lceil \frac{d+1}{2^j}\rceil >N.$ There are two different values of $\sum_{j=0}^{k-1}\lceil \frac{d+1}{2^j}\rceil$ depending on the value of $j$ and we discuss these two cases next.
\begin{itemize}
 \item $0\leq j\leq m(2^k-1)+k-1 \Rightarrow \lceil \frac{d+1}{2^j} \rceil =2^{m(2^k-1)+k-1-j}(2^m-1)+1$;
 \item $m(2^k-1)+k-1< j\leq2^km-1 \Rightarrow \lceil \frac{d+1}{2^j} \rceil =2^{m2^k+k-1-j}$.
\end{itemize}
Then,
\begin{equation*}
\begin{aligned}
\sum_{j=0}^{K-1}\big\lceil \frac{d+1}{2^j}\big\rceil -N &=\sum_{j=0}^{m(2^k-1)+k-1}\big\lceil\frac{d+1}{2^j}\big\rceil+\sum_{j=m(2^k-1)+k}^{m2^k-1}\big\lceil\frac{d+1}{2^j}\big\rceil-N\\
&=(2^k-1)(m-1)+k
\end{aligned}
\end{equation*}
In order to prove $\sum_{j=0}^{K-1}\lceil \frac{d+1}{2^j}\rceil -N>0,$ we can prove $(2^k-1)(m-1)+k>0$ equivalently. Discussing the range of $m$ and $k,$ we obtain that the inequation is true when $k\geq 1$ and $m\geq 2.$
\end{proof}

\section{Application to secret sharing scheme}
The access structure of secret sharing scheme constructed from linear code is hard to determine. However, if all the codewords of the linear code are minimal, we can use its dual code to construct an interesting secret sharing scheme. So next, we check that whether the codewords of $\phi_k(C)$ are minimal or not. The basic property minimal vector of a given binary linear code is described by the following lemma \cite{AB}.
\subsection{Minimal codewords}
\begin{lemma}(Ashikhmin-Barg) Denote by $w_0$ and $w_{\infty}$ the smallest and largest nonzero weights of a binary code $C$, respectively. If
$$\frac{w_0}{w_{\infty}}>\frac{1}{2},$$ then every nonzero codeword of $C$ is minimal.
\end{lemma}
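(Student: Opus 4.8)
The plan is to argue by contraposition, exploiting the fact that over $\mathbb{F}_2$ containment of supports translates directly into an additive relation among Hamming weights. Recall that a nonzero codeword $c$ of a binary code $C$ is \emph{minimal} precisely when the only codewords whose support is contained in $\mathrm{supp}(c)$ are $0$ and $c$ itself. I would therefore begin by supposing, toward a contradiction, that some nonzero $c\in C$ fails to be minimal; then there is a codeword $c'$ with $c'\neq 0$, $c'\neq c$, and $\mathrm{supp}(c')\subseteq\mathrm{supp}(c)$.

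The key step is to form the codeword $c''=c+c'$ and read off its weight. Because we work over $\mathbb{F}_2$ and the two supports are nested, the coordinates where $c$ and $c'$ both equal $1$ cancel, so that $\mathrm{supp}(c'')=\mathrm{supp}(c)\setminus\mathrm{supp}(c')$ and hence $w_H(c'')=w_H(c)-w_H(c')$. Since $c'\neq c$, the vector $c''$ is a nonzero codeword, and likewise $c'$ is nonzero, so each of $w_H(c')$ and $w_H(c'')$ is at least $w_0$.

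From here the inequalities combine at once. Using $w_H(c)\leq w_\infty$ together with $w_H(c')\geq w_0$ and $w_H(c'')\geq w_0$, I obtain
\[
w_\infty\geq w_H(c)=w_H(c')+w_H(c'')\geq w_0+w_0=2w_0,
\]
so that $w_0/w_\infty\leq 1/2$. This contradicts the hypothesis $w_0/w_\infty>1/2$, and therefore every nonzero codeword of $C$ must be minimal.

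The argument has no genuinely difficult step; the one place to be careful is the weight identity $w_H(c+c')=w_H(c)-w_H(c')$, which holds only because the code is binary and the supports are nested --- over a larger field one would instead have to track scalar multiples and the cancellation would be more delicate. I would therefore emphasize that it is exactly the binarity of $C$ (and hence of the Gray image $\phi_k(C)$ to which we apply the lemma) that makes this clean equality available, so that verifying minimality of $\phi_k(C)$ reduces to checking $w_1/w_2>1/2$ with the two weights read off from Table I.
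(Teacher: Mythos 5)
Your proof is correct, but there is nothing in the paper to compare it against: the paper states this lemma without proof, citing it directly from Ashikhmin and Barg \cite{AB}, so you have supplied an argument where the authors only quoted a known result. What you give is the standard proof of the binary case, and every step checks out: if $c$ is not minimal, pick a nonzero $c'\neq c$ with $\mathrm{supp}(c')\subseteq\mathrm{supp}(c)$; over $\mathbb{F}_2$ a vector is determined by its support, so $c'\neq c$ forces the containment to be strict and hence $c''=c+c'$ is a nonzero codeword with $\mathrm{supp}(c'')=\mathrm{supp}(c)\setminus\mathrm{supp}(c')$, giving the exact identity $w_H(c)=w_H(c')+w_H(c'')$ and then $w_\infty\geq w_H(c)\geq 2w_0$, contradicting $w_0/w_\infty>1/2$. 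Your closing caveat is also well placed: it is precisely the binarity that makes the cancellation clean, and the general $\mathbb{F}_q$ version of Ashikhmin--Barg, where one must mod out by scalar multiples, yields the weaker threshold $w_0/w_\infty>(q-1)/q$; since the paper applies the lemma to the Gray image $\phi_k(C)$, which is binary, the $1/2$ threshold is the right one, and your reduction of minimality to $2w_1-w_2>0$ is exactly how the paper uses the lemma in Theorem 6.2.
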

\begin{theorem}
 All the nonzero codewords of $\phi_k(C),$ for $m\geq 2$ are minimal.
\end{theorem}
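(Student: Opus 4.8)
The plan is to apply the Ashikhmin--Barg lemma, which reduces the entire theorem to a single numerical inequality between the two nonzero weights read off from Table~I. The two nonzero Lee weights of $\phi_k(C)$ are $w_1=2^{k-1}2^{m(2^k-1)}(2^m-1)$ and $w_2=2^{k-1}2^{m(2^k-1)}2^m$. I would first note that $w_1<w_2$, since these differ only in the factors $2^m-1$ versus $2^m$; hence in the notation of the lemma $w_0=w_1$ and $w_\infty=w_2$.

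Next I would form the ratio, in which the common factor $2^{k-1}2^{m(2^k-1)}$ cancels:
\[
\frac{w_0}{w_\infty}=\frac{2^m-1}{2^m}=1-\frac{1}{2^m}.
\]
The condition $w_0/w_\infty>1/2$ required by the lemma then becomes $1-2^{-m}>1/2$, which is equivalent to $2^m>2$, and this holds precisely when $m\geq 2$. Invoking the Ashikhmin--Barg lemma for the binary code $\phi_k(C)$ then yields that every nonzero codeword is minimal, completing the argument.

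I do not anticipate any genuine obstacle here: the proof is a one-line cancellation followed by the elementary inequality $2^m>2$. The only point worth emphasizing is that the ratio $(2^m-1)/2^m$ is \emph{independent} of $k$, so no case analysis in $k$ is needed, and the hypothesis $m\geq 2$ alone secures minimality uniformly for all $k\geq 1$.
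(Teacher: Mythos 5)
Your proposal is correct and follows essentially the same route as the paper: both apply the Ashikhmin--Barg lemma with $w_0=w_1$ and $w_\infty=w_2$, the only cosmetic difference being that the paper verifies the equivalent inequality $2w_1-w_2=2^{k-1}2^{m(2^k-1)}(2^m-2)>0$ while you cancel the common factor and check $(2^m-1)/2^m>1/2$ directly. Your observation that the ratio is independent of $k$ is implicit in the paper's factored form as well.
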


\begin{proof}
 By the preceding lemma with $w_0=w_1,$ and $w_{\infty}=w_2.$ Rewriting the inequality of Lemma 6.1 as $2w_1-w_2>0,$ we end up with the condition
$2^{k-1}2^{m(2^k-1)}(2^m-2)>0,$ which is satisfied for $m\geq 2$. Hence, the theorem is proved.
\end{proof}
\subsection{The dual code}
\begin{lemma}
For all $a\in \mathcal{R},$ if $Tr(ax)=0,$ then $x=0.$
\end{lemma}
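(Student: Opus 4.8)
The plan is to read the statement in its only non-vacuous form, namely as the \emph{nondegeneracy} of the trace pairing: if $Tr(ax)=0$ holds for \emph{every} $a\in\mathcal{R}$, then $x=0$. (Taken literally for a single fixed $a$ the claim is false, since $a=0$ yields $Tr(ax)=0$ for all $x$.) I would prove the contrapositive: assuming $x\neq 0$, I would exhibit one element $a\in\mathcal{R}$ for which $Tr(ax)\neq 0$. The whole argument rests on the multiplication rule of the ring, under which $u_Au_B=u_{A\cup B}$ when $A\cap B=\emptyset$ and $u_Au_B=0$ otherwise, so that the ``top'' coordinate $u_{\{1,\dots,k\}}$ of a product can be made to isolate a single coefficient of $x$.

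Concretely, write $x=\sum_{B\subseteq\{1,\dots,k\}} d_Bu_B$ and, using $x\neq 0$, fix any index set $B_0$ with $d_{B_0}\neq 0$. I would then test against $a=c\,u_{\{1,\dots,k\}\setminus B_0}$, where $c\in\mathbb{F}_{2^m}$ is a scalar to be chosen. By the product rule, $u_{\{1,\dots,k\}\setminus B_0}u_B$ is nonzero exactly when $B\cap(\{1,\dots,k\}\setminus B_0)=\emptyset$, i.e. when $B\subseteq B_0$, and in that case it equals $u_{B\cup(\{1,\dots,k\}\setminus B_0)}$. The coordinate $u_{\{1,\dots,k\}}$ therefore receives a contribution only from those $B\subseteq B_0$ with $B\cup(\{1,\dots,k\}\setminus B_0)=\{1,\dots,k\}$, which forces $B=B_0$. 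Hence the top coordinate of $ax$ is exactly $c\,d_{B_0}$, with every larger index set $B'\not\subseteq B_0$ annihilated automatically, so no maximality of $B_0$ is needed.

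To finish I would apply the trace $Tr$ coordinatewise: the $u_{\{1,\dots,k\}}$-component of $Tr(ax)$ equals $tr(c\,d_{B_0})$, where $tr$ is the standard trace from $\mathbb{F}_{2^m}$ to $\mathbb{F}_2$. Since $d_{B_0}\neq 0$ and the trace form $(c,y)\mapsto tr(cy)$ is nondegenerate on $\mathbb{F}_{2^m}$, the $\mathbb{F}_2$-linear map $c\mapsto tr(c\,d_{B_0})$ is not identically zero, so I may choose $c$ with $tr(c\,d_{B_0})\neq 0$. Then $Tr(ax)\neq 0$, contradicting the hypothesis, and therefore $x=0$. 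The only delicate point is the bookkeeping in the product: one must verify carefully that picking $a$ supported on the complement of $B_0$ sends every term of $x$ outside $B_0$ to zero while lifting the $B_0$-term precisely to the top coordinate, so that a single scalar $c$ suffices and no interference from the other coefficients of $x$ can occur.
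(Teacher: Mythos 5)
Your proof is correct, and it takes a genuinely different---and in fact tighter---route than the paper's. Both arguments rest on the same two ingredients: reading the lemma as nondegeneracy of the trace pairing (hypothesis quantified over all $a\in\mathcal{R}$, which is indeed how the lemma is invoked in the proof of Theorem 6.4, so your reinterpretation is the right one), and the nondegeneracy of the field trace $tr\colon\mathbb{F}_{2^m}\to\mathbb{F}_2$. The paper argues directly: it expands $ax=\sum_{A\cap B=\emptyset}c_Ad_Bu_{A\cup B}$ and asserts that $Tr(ax)=0$ ``means'' $tr(c_Ad_B)=0$ for all disjoint $A,B$, then kills each $d_B$ by nondegeneracy. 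That middle step is sloppy as written: for a fixed general $a$, the coefficient of $u_C$ in $ax$ is the sum $\sum c_Ad_B$ over all partitions $C=A\cup B$ with $A\cap B=\emptyset$, so per coordinate one only gets the vanishing of a \emph{sum} of traces, not of the individual terms; to separate them one must vary $a$ over suitable monomials---which is exactly what you do. Your contrapositive with the explicit witness $a=c\,u_{\{1,\dots,k\}\setminus B_0}$ is carried out correctly: the conditions $B\subseteq B_0$ (for the product to survive) and $B\cup(\{1,\dots,k\}\setminus B_0)=\{1,\dots,k\}$ (to reach the top coordinate) force $B=B_0$, so the $u_{\{1,\dots,k\}}$-coefficient of $ax$ is exactly $cd_{B_0}$, and nondegeneracy of $tr$ supplies a $c$ with $tr(cd_{B_0})\neq 0$. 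So your version is not merely valid but repairs a gap the paper glosses over. One small simplification worth noting: the scalar witness $a=c\in\mathbb{F}_{2^m}$ (i.e.\ $A=\emptyset$) works even more cheaply, since then $Tr(cx)=\sum_B tr(cd_B)\,u_B$ displays every coefficient $d_B$ simultaneously and one application of nondegeneracy per $B$ finishes the argument without any top-coordinate bookkeeping.
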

\begin{proof}
Assume that $a=\sum_Ac_Au_A$ and $x=\sum_Bd_Bu_B$ where $A, B\subseteq\{1,...,k\}$ and $c_A, d_B\in \mathbb{F}_{2^m}.$ Thus $ax=\sum\limits _{A,B\subseteq\{1,2,...,k\}, A\cap B=\emptyset}c_Ad_Bu_{A\cup B}.$ Then $Tr(ax)=0$ means $tr(c_Ad_B)=0$ where $A, B\subseteq\{1,...,k\}$ and $A\cap B=\emptyset.$ Using the nondegenerate character of $tr()$ \cite{MS}, we obtain that $d_B=0$ for all $B\subseteq\{1,...,k\}$ which is equivalent to $x=0.$
\end{proof}
Next, we determine the dual Lee distance of the two-Lee-weight code $C.$
\begin{theorem}
For all positive integers $k$ and $m\geq2,$ the dual Lee distance $d'$ of $C$ is $2$.
\end{theorem}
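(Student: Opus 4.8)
The plan is to prove the two inequalities $d'\ge 2$ and $d'\le 2$ separately, working throughout with the binary coordinates of $\phi_k(C)$. By the recursive definition of the Gray map, each ring coordinate $x\in L$ of a word $ev(a)$ contributes $2^k$ binary entries $\phi_k\big(Tr(ax)\big)_i$ with $1\le i\le 2^k$ (exactly the bookkeeping used in the proof of the weight-distribution theorem), so, up to a fixed reordering of coordinates that affects neither weights nor dual distance, $\phi_k(C)$ has $N=2^k|L|$ binary coordinates, and the value in position $(x,i)$ is the $\mathbb{F}_2$-linear functional $P_{x,i}\colon a\mapsto \phi_k\big(Tr(ax)\big)_i$ on $\mathcal{R}$.

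For the lower bound I would rule out a dual word of Lee weight $1$ by showing that no binary coordinate of $\phi_k(C)$ is identically zero, i.e. that every $P_{x,i}$ is a nonzero functional. Fix $x\in L$. Since $x$ is a unit, $a\mapsto ax$ is a bijection of $\mathcal{R}$; since the standard trace $tr$ is surjective, $Tr$ maps $\mathcal{R}$ onto $R_k$; and since $\phi_k$ is an $\mathbb{F}_2$-linear bijection from $R_k$ onto $\mathbb{F}_2^{2^k}$, the composite block map $a\mapsto \phi_k\big(Tr(ax)\big)$ is surjective onto $\mathbb{F}_2^{2^k}$. Hence each projection $P_{x,i}$ is onto $\mathbb{F}_2$, in particular nonzero. (This is the binary-level strengthening of the preceding lemma, which already gives that the $R_k$-valued coordinate $Tr(ax)$ is not identically zero because $x\neq 0$.) Therefore $d'\ge 2$.

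For the upper bound I would exhibit a dual codeword of Lee weight $2$, that is, two distinct binary coordinates agreeing on every codeword. By the previous paragraph the $N$ functionals $P_{x,i}$ are all nonzero elements of the dual $\mathbb{F}_2$-space of $\mathcal{R}$, which has dimension $\dim_{\mathbb{F}_2}\mathcal{R}=m2^k$ and hence only $2^{m2^k}-1$ nonzero functionals. A direct count gives
\[
\frac{N}{2^{m2^k}}=\frac{2^k(2^m-1)2^{m(2^k-1)}}{2^{m2^k}}=2^k-2^{k-m}\ge \tfrac32>1
\]
for all $k\ge 1$ and $m\ge 2$, so $N>2^{m2^k}-1$. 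By the pigeonhole principle two distinct positions $(x,i)\neq(x',j)$ satisfy $P_{x,i}=P_{x',j}$; the $\{0,1\}$-vector supported on exactly these two positions then lies in $\phi_k(C)^{\perp}$ and has Hamming weight $2$, whence $d'\le 2$. (One checks in addition that the $2^k$ functionals inside a single block are pairwise distinct, because a surjection composed with two equal projections forces the projections to be equal, so the colliding pair automatically comes from two different coordinates $x\neq x'$; this is reassuring but not needed for the bound.) Combining the two inequalities yields $d'=2$.

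The main obstacle is the lower bound: one must pass from the $R_k$-level nonvanishing supplied by the preceding lemma to the nonvanishing of each of the $2^k$ binary entries produced by $\phi_k$, and this is exactly where surjectivity of the block map is essential. I would regard the structural pigeonhole argument as the cleanest route and present it as the main proof. As an alternative that sidesteps the Gray-map bookkeeping, one can feed the weight distribution of Table~I into the first two MacWilliams (Pless) power moments, with Krawtchouk polynomials $K_1(i)=N-2i$ and $K_2(i)=\tfrac12\big((N-2i)^2-N\big)$, to compute the dual frequencies $A_1'$ and $A_2'$ explicitly and verify $A_1'=0$ and $A_2'>0$; I would mention this computation only as a remark.
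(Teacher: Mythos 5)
Your proposal is correct, but it takes a genuinely different route from the paper's on both halves. For the upper bound, the paper assumes $d'\geq 3$ and applies the sphere-packing bound to the Gray image of $C^\perp$ (a binary code of size $2^{N-2^km}$, length $N$, distance $\geq 3$), obtaining $2^{2^km}\geq 1+N$ and then the numerical contradiction $2^k<2^{k-m}+1$; your pigeonhole argument on the $N$ coordinate functionals $P_{x,i}$, living in a dual $\mathbb{F}_2$-space with only $2^{m2^k}-1$ nonzero elements, is precisely the constructive face of that same counting (dual distance $\geq 3$ is equivalent to the coordinate functionals being nonzero and pairwise distinct, which forces $N\leq 2^{m2^k}-1$), and it buys you more: an explicit weight-two dual codeword rather than a bare nonexistence statement. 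For the lower bound, the paper stays at the ring level: a Lee-weight-one dual word would have a single nonzero coordinate $\gamma$ at a unit position $x$, and $\gamma\,Tr(ax)=Tr(a\gamma x)=0$ for all $a\in\mathcal{R}$ forces $\gamma x=0$ by Lemma 6.3, hence $\gamma=0$, a contradiction; your surjectivity-of-the-block-map argument is the binary-level sharpening of that lemma and is equally valid. The one point you should make explicit rather than leave tacit: you silently identify the dual Lee distance of $C$ (the minimum Lee weight of the ring-theoretic dual $C^\perp$) with the minimum Hamming distance of the binary dual $\phi_k(C)^\perp$. These coincide because the Gray map over $R_k$ preserves duality, $\phi_k(C^\perp)=\phi_k(C)^\perp$ (Dougherty--Yildiz \cite{DY3}), but that is a citable theorem, not a triviality; the paper's proof sidesteps the issue by reasoning directly about $C^\perp$ and its Gray image, since the sphere-packing bound applies to that image even as a possibly nonlinear binary code. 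Your closing remark on the power moments is also sound---indeed $N-2w_1=0$ makes $A_1'=0$ immediate from $K_1(i)=N-2i$---and is rightly relegated to a remark.
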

\begin{proof}
Here we use the proof by contradiction. Assume that $d'\geq3$ to prove that $d'<3.$ With the sphere-packing bound, we have $2^{2^km}\geq1+N.$ Rewriting the inequation, we have $2^k<2^{k-m}+1$ which is impossible for any positive integers $k$ or $m\geq2.$ Hence, $d'<3.$

Then, we prove that $d'=2.$ If not, $C^\perp$ has at least one codeword with Lee weight one and assume that it has value $\gamma$ at some $x\in\mathcal{R}^*.$ Since $\mathcal{R}^*$ is the set that contains all the units of $\mathcal{R}$, for all $a\in\mathcal{R},$ $\gamma Tr(ax)=0$ implies $Tr(ax\gamma)=0.$ By Lemma 6.3, we know that $x=0$, which is contradict with $x\in\mathcal{R}^*.$
In conclusion, the dual Lee distance $d'$ is equal to $2.$
\end{proof}
\subsection{Massey's scheme}
Massey's scheme  \cite{YD}, is a secret sharing scheme which based on Coding theory. When all nonzero codewords are minimal, it was shown in \cite{DY2} that there is the following alternative, depending on $d'.$
\begin{itemize}
 \item If $d'\ge 3,$ then the SSS is \emph{``democratic''}: every user belongs to the same number of coalitions.
 \item If $d'=2,$  then there are users  who belong to every coalition: the \emph{``dictators''.}
\end{itemize}
By Theorems 6.2 and 6.4, we see that the Secret Sharing Scheme built on $\phi_k(C)$ is dictatorial.
\section{Conclusion}
In the present paper, we have studied an infinite family of trace codes. Because the localizing set $L$ has the structure of an abelian multiplicative group, the trace code
is an abelian code, that is an ideal in the group ring of $L.$ It is not clear whether the code is cyclic or not. In fact, we also study the trace code over the localizing set $\{\sum_Ca_Cu_C:C\subseteq\{1,2,...,k\},a_{C\setminus\{\emptyset\}}\in\mathbb{F}_{2^m},a_\emptyset\in \mathbf{D}\}$ \cite{HY1, HY2} and here $D$ is a subset of $\mathbb{F}_{2^m}^*.$ Two families of codes are obtained. One family of codes is the same as that constructed with the localizing set $L$. Another family of codes is a three-weight codes and it can be used to construct secret sharing schemes. More importantly, it is worthwhile to study other trace codes over the extensions of $R_k.$

\end{document}